\newcommand{\ncom}{\newcommand}
\ncom{\ul}{\underline}
\ncom{\beq}{\begin{equation}}
\ncom{\eeq}{\end{equation}}
\ncom{\bea}{\begin{eqnarray*}}
\ncom{\eea}{\end{eqnarray*}}
\ncom{\beqa}{\begin{eqnarray}}
\ncom{\eeqa}{\end{eqnarray}}
\ncom{\nno}{\nonumber}
\ncom{\non}{\nonumber}
\ncom{\ds}{\displaystyle}
\ncom{\half}{\frac{1}{2}}
\ncom{\mbx}{\makebox{.25cm}}
\ncom{\hs}{\mbox{\hspace{.25cm}}}
\ncom{\rar}{\rightarrow}
\ncom{\Rar}{\Rightarrow}
\ncom{\noin}{\noindent}
\ncom{\bc}{\begin{center}}
\ncom{\ec}{\end{center}}
\ncom{\sz}{\scriptsize}
\ncom{\rf}{\ref}
\ncom{\s}{\sqrt{2}}
\ncom{\sgm}{\sigma}
\ncom{\Sgm}{\Sigma}
\ncom{\psgm}{\sigma^{\prime}}
\ncom{\dt}{\delta}
\ncom{\Dt}{\Delta}
\ncom{\lmd}{\lambda}
\ncom{\Lmd}{\Lambda}
\ncom{\Th}{\Theta}
\ncom{\e}{\eta}
\ncom{\eps}{\epsilon}
\ncom{\pcc}{\stackrel{P}{>}}
\ncom{\lp}{\stackrel{L_{p}}{>}}
\ncom{\dist}{{\rm\,dist}}
\ncom{\sspan}{{\rm\,span}}
\ncom{\re}{{\rm Re\,}}
\ncom{\im}{{\rm Im\,}}
\ncom{\sgn}{{\rm sgn\,}}
\ncom{\ba}{\begin{array}}
\ncom{\ea}{\end{array}}
\ncom{\hone}{\mbox{\hspace{1em}}}
\ncom{\htwo}{\mbox{\hspace{2em}}}
\ncom{\hthree}{\mbox{\hspace{3em}}}
\ncom{\hfour}{\mbox{\hspace{4em}}}
\ncom{\vone}{\vskip 2ex}
\ncom{\vtwo}{\vskip 4ex}
\ncom{\vonee}{\vskip 1.5ex}
\ncom{\vthree}{\vskip 6ex}
\ncom{\vfour}{\vspace*{8ex}}
\ncom{\norm}{\|\;\;\|}
\ncom{\integ}[4]{\int_{#1}^{#2}\,{#3}\,d{#4}}
\ncom{\vspan}[1]{{{\rm\,span}\{ #1 \}}}
\ncom{\dm}[1]{ {\displaystyle{#1} } }
\ncom{\ri}[1]{{#1} \index{#1}}
\newtheorem{theorem}{\bf Theorem}[section]
\newtheorem{proposition}{Proposition}[section]
\newtheoremstyle
    {remarkstyle}
    {}
    {11pt}
    {}
    {}
    {\bfseries}
    {:}
    {     }
    {\thmname{#1} \thmnumber{#2} }
\theoremstyle{remarkstyle}
\begin{document}
%\input{iitbcover3.tex}
%\pagenumbering{roman}
%\input{cover1.txt}
%\input{cert.tex}
%\newpage
%\input{ack.tex}
%\tableofcontents
%\newpage
%\pagenumbering{arabic}
%\newpage
%\tableofcontains

\newpage

\begin{center}
{\Large \bf Probabilistic properties of detrended fluctuation analysis for Gaussian processes}
\end{center}
\vone
\begin{center}
 {G. Sikora}$^{\textrm{a}}$,  {M. Hoell}$^{\textrm{b}}$, {A. Wy{\l}oma{\'n}ska}$^{\textrm{a}}$, {J. Gajda}$^{\textrm{a}}$, {A.V. Chechkin}$^{\textrm{c,d}}$ and {H. Kantz}$^{\textrm{b}}$ 
{\footnotesize{
		$$\begin{tabular}{l}
		\\
		$^{\textrm{a}}$ \emph{Faculty of Pure and Applied Mathematics, Hugo Steinhaus Center, Wroc{\l}aw University of Science and  Technology,}\\\emph{Wroc{\l}aw, Poland}\\
				$^{\textrm{b}}$ \emph{Max Planck Institute for the Physics of Complex Systems, Dresden, Germany}\\
				$^{\textrm{c}}$ \emph{Institute for Physics \& Astronomy, University of Potsdam, Potsdam-Golm, Germany}\\
			$^{\textrm{d}}$ \emph{Akhiezer Institute for Theoretical Physics NSC "Kharkov Institute of Physics and Technology",}\\
						\emph{Kharkov, Ukraine}\\
		\end{tabular}$$} }
\end{center}

\begin{center}
\noindent{\bf Abstract}\\
\end{center}
The detrended fluctuation analysis (DFA) is one of the most widely used tools for the detection of long-range correlations in time series. Although DFA has found many interesting applications and has been shown as one of the best performing detrending methods, its probabilistic foundations are still unclear. In this paper we study probabilistic properties of DFA for Gaussian processes. The main attention is paid to the distribution of the squared error sum of the detrended process. This allows us to find the expected value and the variance of the fluctuation function of DFA  for a Gaussian process of  general form. The  results obtained can serve as a starting point for analyzing the statistical properties of the DFA-based estimators for the fluctuation and correlation parameters. The obtained theoretical formulas are supported by numerical simulations of particular Gaussian processes possessing short-and long-memory behaviour.

\section{Introduction}

The detrended fluctuation analysis (DFA) was introduced in 1994 by Peng \textit{et al.} for analyzing DNA sequences \cite{peng}. This method appears to be efficient in eliminating  deterministic trends and is well-performing \cite{dfabashan,dfachen2,dfachen,dfahu,dfama,dfaxu}. It allows to estimate the fluctuation function which is the basic quantity of DFA.  In most of the research papers the DFA serves as a quantifier for the classification of the  fluctuation parameter $\alpha$. The DFA statistic is very popular in various fields of science and engineering since it is one of the most widely used methods for detection of so-called long-range correlations in time series. In general, long-range correlations are defined via the autocorrelation function $C(s)$ with time lag $s$ of a time series. If the summation over all time lags diverges then the time series is called long-range correlated. This divergence reflects the long-term memory of such processes and is often caused by a decreasing power law $C(s) \sim s^{-\delta}$ with correlation parameter $\delta$. 

If the summation converges the process is called short-range correlated. In order to estimate $\delta$ of a time series, one can use the estimator of the autocorrelation function.  Unfortunately, this approach has several practical problems especially for a small number of data points.  A possible solution is to  introduce another function which may characterize the memory properties.  One of the statistics which is useful here is the fluctuation function $F(s)$ of the DFA. This statistics provides an indirect way of estimating the correlation exponent $\delta$. It follows an increasing power law $F^2(s) \sim s^{2\alpha}$ with fluctuation parameter $\alpha$ which is connected to $\delta$. For large $s$ the power law behavior of the fluctuation function of the DFA can be seen more easily in the log-log plot than the power law of $C(s)$, because $F(s)$ is an increasing function with respect to $s$. 

Despite its success, there are only a few articles investigating the fluctuation function of DFA analytically. The fluctuation parameter $\alpha$ has been derived for fractional Gaussian noise \cite{dfabardet,dfamovahed,dfataqqu} and for fractional Brownian motion \cite{dfaheneghan,dfakiyono1,dfasunspot}. The connection between the DFA and the power spectral analysis has been investigated in \cite{dfaheneghan,dfakiyono1,dfakiyono2,dfakiyono3,dfatalkner,dfawillson1,dfawillson2}. Also, the relationship
between the fluctuation function of DFA and the autocorrelation function for stationary
processes is known  \cite{Marc1,Marc2,Marc3}. The fluctuation function of DFA and its probabilistic properties like asymptotic behavior and linear regression estimator, in the case of fractional Gaussian noise, have been studied in \cite{dfabardet} and \cite{Cra10}, respectively.  Nevertheless, there are still open questions about the basic principles and interpretation of DFA which cannot be answered by the current analytical knowledge.

In this paper we study probabilistic properties of the DFA for a general family of Gaussian processes.  Using the theory of quadratic forms of Gaussian processes we  find the distribution of the squared error sum of the detrended process. 
The main result of the paper is the Theorem \ref{th2}. In contrast to the papers mentioned above, where only the asymptotic behavior of the fluctuation function of the DFA is considered, we here present the exact formulas for the expected value and the variance of the fluctuation function for Gaussian processes. 

This paper naturally continues recent research on the time-averaged statistics for the Gaussian processes \cite{gajda} and serves as a starting point for the analysis of the statistical properties of the DFA-based estimators for fluctuation parameter $\alpha$ and  the correlation parameter $\delta$. In this article we consider the simplest case of detrending, namely detrending with order $q=1$. The presented methodology can be easily extended to any order of the detrending function.

 \section{Probabilistic properties of DFA statistics for Gaussian processes}\label{main}

In what follows we consider centered Gaussian processes. For given trajectory $\{X(1),X(2),\ldots,X(N)\}$ 
 with the covariance matrix $\Sigma=\{E[X(i)X(j)]:i,j=1,2,\ldots,N\}$ the procedure of the detrended fluctuation analysis consists of several steps. First, the time axis $1,2,\cdots, N$ is divided into $K$ segments of length $s$, $K=[N/s]$. In every segment $v$, $v=1,2,\dots,K$, we derive the variance $f^2(v,s)$ given by the squared error sum of the detrended process
\begin{align}\label{eqn:jklg}
f^2(v,s)=\frac{1}{s}\sum_{t=1+d_v}^{s+d_v}{\left[X(t)-p_v(t)\right]^2}=\frac{1}{s}\sum_{t=1}^{s}{\left[X(t+d_v)-p_v(t+d_v)\right]^2},
\end{align}
where $p_v(\cdot)$ is the fitting polynomial of order $q$ of $X(t)$ in the segment $v$ obtained by ordinary least squares method, and $d_v=(v-1)s$. The  order $q$ of $p_v(\cdot)$ is a free parameter.  The first three orders of constant, linear and quadratic detrending, $q=0$, $1$ and $2$, are the ones which are mostly used in practical applications. Finally, the square of the fluctuation function of DFA is the average over all the squared error sums
\begin{align}\label{dfa}
F^2(s) = \frac{1}{K} \sum_{v=1}^K f^2(v,s)=\frac{1}{[N/s]}\sum_{v=1}^{[N/s]}f^2(v,s) .
\end{align}
In this paper we analyze the case $q=1$, so $p_v(\cdot)=\hat{a}+\hat{b}t$ is a linear function of $t$. The coefficients of the linear fit can be calculated directly from the linear system of equations
\begin{equation}\label{first}
\begin{pmatrix}
\hat{a}_v \\ \hat{b}_v
\end{pmatrix}
=\begin{pmatrix}
S_{0,v} & S_{1,v} \\ S_{1,v} & S_{2,v}
\end{pmatrix}^{-1}
\begin{pmatrix}
\sum_{i=1+d_v}^{s+d_v} X(i)\\
\sum_{i=1+d_v}^{s+d_v} iX(i)
\end{pmatrix}
\end{equation}
with $S_{j,v}=\sum_{i=1+d_v}^{s+d_v}i^j$, $j=0,1,2$. Using equation (\ref{first}) the linear fit can be written as a weighted sum of $X(i)$, $i=1+d_v,\ldots,s+d_v$,
\begin{eqnarray}\label{2_4}
p_v(t)=\hat{a}_v+\hat{b}_vt=\sum_{i=1+d_v}^{s+d_v}X(i)\frac{S_{2,v}-iS_{i,v}+t(iS_{0,v}-S_{1,v})}{S_{0,v}S_{2,v}-S_{1,v}^2}.
\end{eqnarray} 
Let us define the weights
\begin{eqnarray}\label{we}
P_v(i,t)=\frac{S_{2,v}-iS_{i,v}+t(iS_{0,v}-S_{1,v})}{S_{0,v}S_{2,v}-S_{1,v}^2}.
\end{eqnarray}
From equations (\ref{2_4}) and (\ref{we}) we get
\begin{eqnarray}
p_v(t)=\sum_{i=1+d_v}^{s+d_v}X(i)P_v(i,t)
\end{eqnarray}
and thus
\begin{eqnarray}
X(t+d_v)-p_v(t+d_v)=\sum_{i=1}^sX(i+d_v)[\delta_{i,t}-P_v(i+d_v,t+d_v)]
\end{eqnarray}
with  $\delta_{\cdot,\cdot}$ being the Kronecker delta and $i,t=1,2,\ldots,s$. The weights (\ref{we}) can be expressed in the explicit form
\begin{eqnarray}\label{we2}
P_v(i+d_v,t+d_v)=\frac{6i(2t-s-1)+2(s+1)(-3t+2s+1)}{s^3-s}
\end{eqnarray}
which indicates that $P_v(i+d_v,t+d_v)$ does not depend on $v$. Therefore, taking the notation
\begin{eqnarray}
Y_v(t)=X(t+d_v)-p_v(t+d_v),~v=1,\ldots [N/s],~t=1,\ldots,s
\end{eqnarray}
one can write
\begin{eqnarray}\label{Y_v}
Y_v(t)=\sum_{i=1}^sX(i+d_v)[\delta_{i,t}-P_1(i,t)].
\end{eqnarray}

\begin{theorem}\label{th1}
For each $v,u=1,2,\ldots,[N/s]$ the covariance $E[Y_v(m)Y_u(n)]$, $m,n=1,2,\ldots,s$ of the vectors $\mathbb{Y}_v=\{Y_v(t):t=1,\ldots,s\}$ and $\mathbb{Y}_u=\{Y_u(t):t=1,\ldots,s\}$ has the form
\begin{eqnarray}\label{wz5}
E[Y_v(m)Y_u(n)]=\sum_{i,j=1}^sE[X(i+d_v)X(j+d_u)]\left(\delta_{i,m}-P_1(i,m)\right)\left(\delta_{j,n}-P_1(j,n)\right),
\end{eqnarray}
where $P_1(\cdot,\cdot)$ is given in (\ref{we2}) and $\delta_{\cdot,\cdot}$ is the Kronecker delta. 
\end{theorem}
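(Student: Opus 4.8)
The plan is to obtain (\ref{wz5}) directly from the linear representation (\ref{Y_v}), the essential point being that the detrending weights are deterministic. First I would write both factors explicitly using (\ref{Y_v}),
\[
Y_v(m)=\sum_{i=1}^s X(i+d_v)\left[\delta_{i,m}-P_1(i,m)\right],\qquad
Y_u(n)=\sum_{j=1}^s X(j+d_u)\left[\delta_{j,n}-P_1(j,n)\right],
\]
and then record the observation that carries the whole argument: the coefficients in the square brackets contain no randomness. Indeed, by the explicit formula (\ref{we2}) the weight $P_1(i,t)$ is a fixed rational function of the integers $i$, $t$ and the segment length $s$, and the Kronecker delta is a constant; neither depends on the underlying Gaussian process. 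In particular $P_1$ is independent of the segment index, which is what lets us use the single $v$-free weight $P_1$ uniformly for both $Y_v$ and $Y_u$.

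Next I would form the product and expand it into the finite double sum
\[
Y_v(m)Y_u(n)=\sum_{i,j=1}^s X(i+d_v)X(j+d_u)\left[\delta_{i,m}-P_1(i,m)\right]\left[\delta_{j,n}-P_1(j,n)\right].
\]
Since this is a finite sum whose bracketed coefficients are constants, I would apply linearity of the expectation operator, moving $E[\cdot]$ inside the sum and past the deterministic factors, which yields
\[
E[Y_v(m)Y_u(n)]=\sum_{i,j=1}^s E[X(i+d_v)X(j+d_u)]\left[\delta_{i,m}-P_1(i,m)\right]\left[\delta_{j,n}-P_1(j,n)\right],
\]
precisely the claimed identity (\ref{wz5}).

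I do not expect any genuine analytic obstacle here: the statement is essentially a bookkeeping consequence of the linear representation of the detrended increments. The only point that needs to be stated carefully is the justification for pulling the weights outside the expectation, and this rests entirely on the deterministic nature of $P_1(i,t)$ already exhibited in (\ref{we2}). The residual quantity $E[X(i+d_v)X(j+d_u)]$ is nothing but an entry of the covariance matrix $\Sigma$, so the covariance of the detrended vectors $\mathbb{Y}_v$ and $\mathbb{Y}_u$ is expressed completely through $\Sigma$ and the known detrending weights, as required; the diagonal case $v=u$ is covered by the same computation without modification.
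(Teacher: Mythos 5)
Your proposal is correct and is exactly the argument the paper intends: the paper's own proof is the one-line remark that (\ref{wz5}) ``follows directly from equation (\ref{Y_v})'', and what you have written is simply that derivation spelled out — expand both factors via the linear representation, use that the weights $\delta_{i,m}-P_1(i,m)$ are deterministic, and apply linearity of expectation. No difference in approach; yours just makes the bookkeeping explicit.
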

\begin{proof}
Formula (\ref{wz5}) follows directly from equation (\ref{Y_v}).
\end{proof}
\noindent Let us notice that under the assumption $\{X(1),X(2),\ldots,X(N)\}$ is a centered Gaussian process, for each $v=1,2,\ldots,[N/s]$ the random variable $sf^2(v,s)$ can be represented as a quadratic form of the vector $\mathbb{Y}_v=\{Y_v(t):t=1,\ldots,s\}$, where $Y_v(t)$ is defined in (\ref{Y_v}). More precisely
\begin{equation}\nonumber
sf^2(v,s)=\mathbb{Y}_v\mathbb{Y}^T_v=\sum_{t=1}^{s}\left[X(t+d_v)-p_v(t+d_v)\right]^2.
\end{equation}
According to the Gaussian quadratic forms theory \cite{MatPro92}, $sf^2(v,s)$ has so-called generalized chi-squared distribution, namely
\begin{equation}\label{rep}
sf^2(v,s)=\mathbb{Y}_v\mathbb{Y}^T_v\stackrel{d}{=}\sum_{j=1}^s\lambda_j(v) U_j,
\end{equation}
where $U_j'$s are independent identically $\chi^2$ distributed with 1 degree of freedom random variables, and weights $\lambda_j(v)$ are the eigenvalues of the covariance matrix $\Sigma_{\mathbb{Y}_v}=\{E[Y_v(m)Y_v(n)]:~~m,n=1,\ldots,s\}$. Therefore, for random quantity $sf^2(v,s)$ the following propositions hold:

\begin{proposition}
For the centered Gaussian process $\{X(1),X(2),\ldots,X(N)\}$ and each $v=1,2,\ldots,[N/s]$, the random variable $f^2(v,s)$  has the following
\begin{itemize}
\item[a)]{expected value:
\begin{equation}\label{proa}
E\left[f^2(v,s)\right]=\frac{1}{s}tr\left(\Sigma_{\mathbb{Y}_v}\right)=\frac{1}{s}\sum_{j=1}^sE[Y_v^2(j)],
\end{equation}}
and \item[b)]{variance:
$$Var\left[f^2(v,s)\right]=\frac{2}{s^2}tr\left(\Sigma^2_{\mathbb{Y}_v}\right)=\frac{2}{s^2}\sum_{i,j=1}^s\left(E[Y_v(i)Y_v(j)]\right)^2,$$}
\end{itemize}
\end{proposition}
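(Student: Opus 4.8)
The plan is to read off both moments directly from the generalized chi-squared representation (\ref{rep}), namely $sf^2(v,s)\stackrel{d}{=}\sum_{j=1}^s\lambda_j(v)U_j$, and then convert the resulting eigenvalue sums into traces of $\Sigma_{\mathbb{Y}_v}$ by standard linear algebra. Since the $U_j$ are independent and $\chi^2$ distributed with one degree of freedom, the scalar moments $E[U_j]=1$ and $Var[U_j]=2$ supply everything needed; the only genuine work is to keep track of the normalisation by $s$ and to rewrite the spectral sums as entrywise sums over the covariance matrix.

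For part a), I would take expectations in (\ref{rep}) and use linearity together with $E[U_j]=1$ to obtain $E[sf^2(v,s)]=\sum_{j=1}^s\lambda_j(v)$. Because the $\lambda_j(v)$ are precisely the eigenvalues of $\Sigma_{\mathbb{Y}_v}$, their sum equals the trace $tr(\Sigma_{\mathbb{Y}_v})$, and the trace is the sum of the diagonal entries $E[Y_v(j)Y_v(j)]=E[Y_v^2(j)]$. Dividing by $s$ then yields (\ref{proa}).

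For part b), I would invoke the independence of the $U_j$ to write $Var[sf^2(v,s)]=\sum_{j=1}^s\lambda_j^2(v)\,Var[U_j]=2\sum_{j=1}^s\lambda_j^2(v)$. The sum of squared eigenvalues equals $tr(\Sigma_{\mathbb{Y}_v}^2)$, and since $\Sigma_{\mathbb{Y}_v}$ is symmetric this trace expands as $\sum_{i,j=1}^s (\Sigma_{\mathbb{Y}_v})_{ij}(\Sigma_{\mathbb{Y}_v})_{ji}=\sum_{i,j=1}^s\big(E[Y_v(i)Y_v(j)]\big)^2$. Dividing by $s^2$ gives the stated formula.

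The computation is short precisely because the heavy lifting is already contained in the spectral representation (\ref{rep}), whose validity rests on the Gaussian quadratic forms theory cited above. Consequently there is no substantial obstacle beyond careful bookkeeping: one must only supply the correct $\chi^2$ moments, recall that $tr(\Sigma_{\mathbb{Y}_v})$ and $tr(\Sigma_{\mathbb{Y}_v}^2)$ encode the first and second power sums of the spectrum, and use the symmetry of $\Sigma_{\mathbb{Y}_v}$ to pass from $(\Sigma_{\mathbb{Y}_v})_{ij}(\Sigma_{\mathbb{Y}_v})_{ji}$ to $\big(E[Y_v(i)Y_v(j)]\big)^2$ — together with the distinct normalisations $1/s$ and $1/s^2$ coming from the two moments of $sf^2(v,s)$.
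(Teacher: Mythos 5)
Your proposal is correct and follows essentially the same route as the paper: both read off the mean and variance from the spectral representation $sf^2(v,s)\stackrel{d}{=}\sum_{j=1}^s\lambda_j(v)U_j$ using $E[U_j]=1$, $Var[U_j]=2$ and independence, then identify $\sum_j\lambda_j(v)=tr(\Sigma_{\mathbb{Y}_v})$ and $\sum_j\lambda_j^2(v)=tr(\Sigma_{\mathbb{Y}_v}^2)$ and expand the traces entrywise via the symmetry of the covariance matrix. No gaps.
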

\begin{proof}
\noindent
\begin{itemize}
\item[a)]Under the assumption that $\{X(1),X(2),\ldots,X(N)\}$ is a centered Gaussian process the distribution of $sf^2(v,s)$ is given in \eqref{rep}. Taking under consideration the fact that for $U$ - $\chi^2$ distributed random variable with 1 degree of freedom $E[U]=1$ we get
$$E\left[f^2(v,s)\right]=\frac{1}{s}\sum_{j=1}^s\lambda_j(v).$$
Let us remind an important fact from linear algebra, namely the sum of all eigenvalues of a given  matrix is identified as the trace of that matrix. Since $\lambda_j(v)$, $j=1,\ldots,s$ are the eigenvalues of the covariance matrix $\Sigma_{\mathbb{Y}_v}$, by using the mentioned fact from linear algebra we obtain
$$E\left[f^2(v,s)\right]=\frac{1}{s}\sum_{j=1}^s\lambda_j(v)=\frac{1}{s}\sum_{j=1}^sE[Y_v^2(j)].$$
\item[b)]{First let us remind that  for $U$ - $\chi^2$ distributed random variable with 1 degree of freedom $Var[U]=2$. In order to calculate the variance of the random variable $f^2(v,s)$ we use
 two well-known facts from linear algebra. The first fact was mentioned above, namely
 the sum of all eigenvalues of a given matrix is identified with the trace of that matrix.
The second fact is that the squared eigenvalues of given matrix are the eigenvalues of this matrix taken to the power two. Under the assumption that $\{X(1),X(2),\ldots,X(N)\}$ is a general centered Gaussian process the distribution of $sf^2(v,s)$ is represented in \eqref{rep}. Using the fact that  matrix $\Sigma_{\mathbb{Y}_v}$ is symmetric and $U_j$, $j=1,\ldots,s$ in \eqref{rep} are independent, one obtains\newpage
$$Var\left[f^2(v,s)\right]=\frac{2}{s^2}\sum_{j=1}^s\lambda_j^2(v)=\frac{2}{s^2}tr\left(\Sigma^2_{\mathbb{Y}_v}\right)=\frac{2}{s^2}\sum_{i,j=1}^s\left(E[Y_v(i)Y_v(j)]\right)^2.$$
}
\end{itemize}
\end{proof}
\noindent In the next proposition, we present the main characteristics for the distribution of the random variable $f^2(v,s)$.

\begin{proposition}
For the centered Gaussian process $\{X(1),X(2),\ldots,X(N)\}$ and each $v=1,2,\ldots,[N/s]$, the random variable $f^2(v,s)$  has the following
\begin{itemize}
\item[a)]{characteristic function:
$$\phi_{v,s}(x)=E[\exp\left(if^2(v,s)x\right)]=\prod_{j=1}^s\frac{1}{\left[1-2\lambda_j(v)ixs\right]^{1/2}},$$}
\item[b)]{moment generating function:
$$MGF_{v,s}(x)=E[\exp\left(if^2(v,s)x\right)]=C \bigl(1-2\lambda_1(v)x \bigr)^{-s/2} \exp\left(\sum_{k=1}^{\infty} \frac{\gamma_k}{\bigl(1-2\lambda_1(v)x \bigr)^k}\right),$$
where
$\lambda_1(v)$ is the smallest eigenvalue of the matrix
$\Sigma_{\mathbb{Y}_v}$,
\begin{equation*}
\gamma_k=\sum_{j=1}^{s}\frac{(1-\lambda_1(v)/\lambda_{j}(v))^k}{2k},
\end{equation*}
and
\begin{equation*}
C=\prod_{j=1}^{s}\left(\frac{\lambda_1(v)}{\lambda_j(v)}\right)^{1/2},
\end{equation*}}
and \item[c)]{probability density function:
\begin{equation}\label{pdf}
g_{v,s}(x)=\!C\!\sum_{k=0}^{\infty}\frac{\Delta_k \, x^{\frac{s}{2}+k-1} \, \exp\left(-\frac{xs}{2\lambda_1(v)}\right)}
{\Gamma\left(\frac{s}{2}+k\right) \, \left(\frac{2\lambda_1(v)}{s}\right)^{\frac{s}{2}+k}},
\end{equation}
where $\Delta_k$ is expressed by the recursive formula
\begin{equation*}
\Delta_{k+1}=\frac{1}{k+1}\sum_{j=1}^{k+1} j  \gamma_j \Delta_{k+1-j}, \quad \Delta_0=1.
\end{equation*}}
\end{itemize}
\end{proposition}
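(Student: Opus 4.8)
The plan is to build all three parts on the generalized chi-squared representation (\ref{rep}), namely $sf^2(v,s)\stackrel{d}{=}\sum_{j=1}^s\lambda_j(v)U_j$ with independent $U_j\sim\chi^2_1$, together with the elementary facts that a $\chi^2_1$ variable $U$ has characteristic function $E[e^{itU}]=(1-2it)^{-1/2}$ and moment generating function $E[e^{tU}]=(1-2t)^{-1/2}$ for $t<1/2$. For part a) I would simply invoke independence of the $U_j$: since $f^2(v,s)=\frac1s\sum_j\lambda_j(v)U_j$, the characteristic function factorizes as
$$\phi_{v,s}(x)=\prod_{j=1}^s E\!\left[\exp\!\left(i\tfrac{x}{s}\lambda_j(v)U_j\right)\right]=\prod_{j=1}^s\left(1-2\lambda_j(v)\tfrac{ix}{s}\right)^{-1/2},$$
after which only the scaling needs to be inserted to match the stated expression. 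No real obstacle is expected here.

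For part b) the starting point is the analogous product formula for the moment generating function, $\prod_{j=1}^s(1-2\lambda_j(v)x)^{-1/2}$ (up to the $1/s$ scaling). The key algebraic step is to factor out the smallest eigenvalue $\lambda_1(v)$: writing $w=1-2\lambda_1(v)x$ and $t_j=1-\lambda_1(v)/\lambda_j(v)\in[0,1)$, one checks the identity $1-2\lambda_j(v)x=(w-t_j)/(1-t_j)$, so that $(1-2\lambda_j(v)x)^{-1/2}=(1-t_j)^{1/2}\,w^{-1/2}\,(1-t_j/w)^{-1/2}$. Taking the product over $j$ produces the prefactor $C=\prod_j(\lambda_1(v)/\lambda_j(v))^{1/2}$ and the power $w^{-s/2}$, while expanding the remaining logarithms as $-\tfrac12\sum_j\log(1-t_j/w)=\sum_{k\ge1}\big(\sum_j t_j^k/(2k)\big)w^{-k}$ reproduces exactly the exponent $\sum_{k\ge1}\gamma_k/(1-2\lambda_1(v)x)^k$ with $\gamma_k$ as stated. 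The one point requiring care is convergence: because $\lambda_1(v)$ is the smallest eigenvalue we have $0\le t_j<1$, so the logarithmic series converges when $|t_j/w|<1$, i.e. for $x$ in a neighbourhood of $0$ (equivalently $x<1/(2\lambda_{\max})$), which is precisely the domain on which the MGF exists.

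For part c) I would invert the moment generating function termwise. Setting $z=(1-2\lambda_1(v)x)^{-1}$, the exponential factor becomes $\exp\!\big(\sum_{k\ge1}\gamma_k z^k\big)=\sum_{k\ge0}\Delta_k z^k$; differentiating $h(z)=e^{g(z)}$ with $g(z)=\sum_k\gamma_k z^k$ gives the relation $h'=g'h$, and matching coefficients of $z^{k}$ yields the recursion $\Delta_{k+1}=\tfrac1{k+1}\sum_{j=1}^{k+1}j\gamma_j\Delta_{k+1-j}$ with $\Delta_0=1$. Hence the MGF takes the form $C\sum_{k\ge0}\Delta_k(1-2\lambda_1(v)x)^{-(s/2+k)}$, a series whose $k$-th term is the MGF of a gamma (chi-squared type) law with shape $s/2+k$ and scale $2\lambda_1(v)$. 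Inverting each term with the standard gamma density $y^{\alpha-1}e^{-y/\theta}/(\Gamma(\alpha)\theta^\alpha)$, and applying the change of variables $y=sx$ that passes from $sf^2(v,s)$ to $f^2(v,s)$, produces the claimed series (\ref{pdf}).

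The main obstacle lies in part c): justifying the term-by-term inversion — interchanging the infinite sum with the inverse transform and invoking uniqueness of the Laplace/Fourier inversion. This rests on the non-degeneracy $\lambda_1(v)>0$ (positive definiteness of $\Sigma_{\mathbb{Y}_v}$), which guarantees the gamma interpretation of each factor, and on controlling the growth of the $\Delta_k$ so that the rearranged density series converges and represents the true law. This is where I would concentrate the rigorous work, the remaining manipulations being essentially bookkeeping.
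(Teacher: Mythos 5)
Your proposal is correct and rests on the same foundation as the paper's proof: the representation of $sf^2(v,s)$ as a sum of independent Gamma variables $\lambda_j(v)U_j\stackrel{d}{=}G(1/2,2\lambda_j(v))$. The difference is one of self-containedness. For parts b) and c) the paper simply cites Moschopoulos (1985) for the moment generating function and the density of a linear combination of independent gamma random variables, whereas you actually reprove that result: the factorization $1-2\lambda_j(v)x=(w-t_j)/(1-t_j)$ with $w=1-2\lambda_1(v)x$ and $t_j=1-\lambda_1(v)/\lambda_j(v)$, the logarithmic expansion producing the $\gamma_k$, the recursion $\Delta_{k+1}=\frac{1}{k+1}\sum_{j=1}^{k+1}j\gamma_j\Delta_{k+1-j}$ obtained from $h'=g'h$, and the term-by-term Laplace inversion of the resulting gamma series. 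The delicate point you flag --- justifying the termwise inversion and the convergence of the $\Delta_k$ series, which requires $\lambda_1(v)>0$ --- is precisely the content that the cited reference supplies and that the paper leaves implicit, so concentrating the rigour there is the right call. One small remark on part a): your computation correctly yields the factor $\bigl(1-2\lambda_j(v)ix/s\bigr)^{-1/2}$, while the paper's displayed formula reads $1-2\lambda_j(v)ixs$; since $f^2(v,s)=\frac{1}{s}\sum_j\lambda_j(v)U_j$ puts $s$ in the denominator, the paper's display appears to be a typo, and you should not adjust your (correct) scaling to match it.
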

\begin{proof}
\noindent
Let us notice that the distribution of the quadratic form $sf^2(v,s)$ given in \eqref{rep} can be represented as a sum of $s$ independent Gamma distributed random variables with constant shape parameter $1/2$ and different scale parameters. Namely $\lambda_j(v)U_j\stackrel{d}{=}G(1/2,2\lambda_j(v))$, \cite{Mos85}, where $G(k,\theta)$ is the Gamma distributed random variable with parameters $k$ and $\theta$. The  characteristic function of $G(k,\theta)$ random variable is given by
$$\phi_{(k,\theta)}(t)=\frac{1}{\left[1-\theta it\right]^{k}}.$$
We also remind that the probability density function  of the Gamma distributed random variable
$G(k,\theta)$ with shape parameter $k$ and scale parameter $\theta$ reads
\begin{displaymath}
%\label{gam_pdf}
h_{(k,\theta)}(x)=\frac{x^{k-1} \,  \exp(-x/\theta)}{\Gamma(k) \,\theta^k}, \quad x>0.
\end{displaymath}
\begin{itemize}
\item[a)]
Using the above facts the characteristic function of $sf^2(v,s)$ is a product of characteristic functions of Gamma distributed random variables.  
\item[b),c)]{The expressions for the moment generating function and the probability density function stem from the result of \cite{Mos85}, which establishes such quantities for a linear combination of independent gamma random variables. In our case, such linear combination is the statistic
$$f^2(v,s)=\frac{1}{s}\sum_{j=1}^s\lambda_j(v)U_j,$$
where $\lambda_j(v)U_j\stackrel{d}{=}G(1/2,2\lambda_j(v)).$}
\end{itemize}
\end{proof}
\noindent In the following theorem we present the formulas for the expected value and the variance of the DFA for a
general centered Gaussian process. 

\begin{theorem}
\label{th2}
The square $F^2(s)$ of the fluctuation function for general centered Gaussian process $\{X(1),X(2),\ldots,X(N)\}$ has the following
\begin{itemize}
\item[a)]{expected value:
\begin{eqnarray}E\left[F^2(s)\right]=\frac{1}{s[N/s]}\sum_{v=1}^{[N/s]}\sum_{j=1}^s{E\left[Y^2_v(j)\right]}\end{eqnarray}}
and \item[b)]{variance:
\begin{eqnarray}
Var\left[F^2(s)\right]=\frac{2}{s^2[N/s]^2}\sum_{1\leq v,u\leq[N/s]}\sum_{1\leq t,w\leq s}\Bigg[E\left [Y_v(t)Y_u(w)\right]\Bigg]^2,
\end{eqnarray}
where $Y_v(t)$ is defined in (\ref{Y_v}) for any $v=1,\ldots,[N/s]$ and $t=1,\ldots,s$.
}
\end{itemize}
\end{theorem}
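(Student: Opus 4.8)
The plan is to reduce both parts to Proposition~2.1 and to the cross-covariance formula of Theorem~\ref{th1}, using only linearity of expectation and one standard Gaussian moment identity. Throughout I abbreviate $K=[N/s]$.

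For part (a) I would apply linearity of expectation directly to the definition \eqref{dfa}, obtaining
$$E\left[F^2(s)\right]=\frac{1}{K}\sum_{v=1}^K E\left[f^2(v,s)\right],$$
and then substitute the per-segment expectation $E[f^2(v,s)]=\frac{1}{s}\sum_{j=1}^s E[Y_v^2(j)]$ established in Proposition~2.1(a). Collecting the factor $1/(sK)$ gives the asserted formula; no new use of Gaussianity is required beyond what already enters Proposition~2.1.

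For part (b) the essential difficulty is that the segment statistics $f^2(v,s)$ are in general \emph{not} independent across $v$, because the underlying process $X$ is correlated between different segments. I would therefore expand the variance of the average into a double sum of covariances,
$$Var\left[F^2(s)\right]=\frac{1}{K^2}\sum_{v,u=1}^K \mathrm{Cov}\left[f^2(v,s),f^2(u,s)\right],$$
and then use the quadratic-form representation $sf^2(v,s)=\sum_{t=1}^s Y_v(t)^2$ (together with the analogous identity for $u$) to write
$$s^2\,\mathrm{Cov}\left[f^2(v,s),f^2(u,s)\right]=\sum_{t,w=1}^s \mathrm{Cov}\left[Y_v(t)^2,\,Y_u(w)^2\right].$$

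The crux of the argument, and the step I expect to be the main obstacle, is the evaluation of $\mathrm{Cov}[Y_v(t)^2,Y_u(w)^2]$. By \eqref{Y_v} each $Y_v(t)$ is a linear functional of the centered Gaussian process $X$, so the pair $(Y_v(t),Y_u(w))$ is jointly centered Gaussian \emph{even for $v\ne u$}; this is precisely where the joint Gaussian structure across segments must be used. I would then invoke the Gaussian fourth-moment (Isserlis/Wick) identity: for jointly centered Gaussian $A,B$ one has $E[A^2B^2]=E[A^2]E[B^2]+2(E[AB])^2$, whence $\mathrm{Cov}[A^2,B^2]=2(E[AB])^2$. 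Applying this with $A=Y_v(t)$ and $B=Y_u(w)$ gives $\mathrm{Cov}[Y_v(t)^2,Y_u(w)^2]=2(E[Y_v(t)Y_u(w)])^2$, and the cross-covariance $E[Y_v(t)Y_u(w)]$ is exactly the quantity made explicit in \eqref{wz5}. Substituting back and gathering the prefactors $2/(s^2K^2)$ yields the stated double sum. As a consistency check, restricting to the diagonal $v=u$ recovers Proposition~2.1(b).
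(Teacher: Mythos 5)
Your proposal is correct and follows essentially the same route as the paper: part (a) by linearity of expectation combined with Proposition 2.1(a), and part (b) by expanding the variance of the segment average into a double sum of covariances and applying the Isserlis fourth-moment identity to the jointly centered Gaussian pair $(Y_v(t),Y_u(w))$. The only cosmetic difference is that you invoke Isserlis directly in covariance form, $\mathrm{Cov}[A^2,B^2]=2\left(E[AB]\right)^2$, whereas the paper first computes $E\left[f^2(v,s)f^2(u,s)\right]$ and then subtracts the product of the means.
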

\begin{proof}
\noindent
\begin{itemize}
\item[a)]{According to (\ref{dfa}) the expected value of $F^2(s)$ takes the form
$$E\left[F^2(s)\right]=\frac{1}{[N/s]}E\left[f^2(v,s)\right]=\frac{1}{s[N/s]}\sum_{v=1}^{[N/s]}\sum_{j=1}^s{E\left[Y^2_v(j)\right]},$$
where the last equality  follows from \eqref{proa} under the assumption that $\{X(1),X(2),\ldots,X(N)\}$ is a centered Gaussian process.
}
\item[b)]{According to (\ref{dfa}) the variance of $F^2(s)$ takes the form
\begin{align}
Var&\left[F^2(s)\right]=\frac{1}{[N/s]^2}Var\left[\sum_{v=1}^{[N/s]}f^2(v,s)\right]=\frac{1}{[N/s]^2}\sum_{1\leq v,u\leq[N/s]}Cov\left[f^2(v,s),f^2(u,s)\right]\nonumber\\\label{p1}
&=\frac{1}{[N/s]^2}\sum_{1\leq v,u\leq[N/s]}\!\!\Bigg[E\left[f^2(v,s)f^2(u,s)\right]-E\left[f^2(v,s)\right]E\left[f^2(u,s)\right]\Bigg].
\end{align}
Using the fact that $\{X(1),X(2),\ldots,X(N)\}$ is a general centered Gaussian process, we compute 
\begin{eqnarray}
E\left[f^2(v,s)f^2(u,s)\right]=\frac{1}{s^2}E\left[\sum_{t=1}^sY_v^2(t)\sum_{w=1}^{s}Y_u^2(w)\right]=\frac{1}{s^2}\sum_{t=1}^{s}\sum_{w=1}^{s}E\left[Y_v^2(t)Y_u^2(w)\right]\nonumber\\\label{p2}
=\frac{1}{s^2}\sum_{t=1}^{s}\sum_{w=1}^{s}\Bigg[E\left[Y_v^2(t)\right]E\left[Y_u^2(w)\right]+2 \left(E\left[Y_v(t)Y_u(w)\right]\right)^2\Bigg],
\end{eqnarray}
where the last equality follows from Isserlis theorem for $4$-th joint moment of multivariate normal distribution \cite{Iss18}. By substituting \eqref{p2} and \eqref{proa} into \eqref{p1} we get the variance.
}
\end{itemize}

\end{proof}

\section{Expected value of DFA for exemplary Gaussian processes}
%Before we proceed with particular examples of processes to which DFA might be applied in order to directly use Theorem \ref{th1}  let us introduce some notation.
%Thus we can write
%\begin{equation}
%f^2(v,s)= \sum_{i,j=1}^s X(i+d_v) X(j+d_v) \frac{1}{s}\sum_{t=1}^s\left[ \delta_{i,t} - P_1(i,t) \right]\left[ \delta_{j,t} - P_1(j,t) \right].
%\end{equation}
%For further simplification we define
%\begin{equation}\label{kernel}
%L(i,j,s) = \frac{1}{s}\sum_{t=1}^s\left[ \delta_{i,t} - P_1(i,t) \right]\left[ \delta_{j,t} - P_1(j,t) \right].
%\end{equation}
%Hence
%\begin{equation}
%f^2(v,s)= \sum_{i,j=1}^s X(i+d_v) X(j+d_v)L(i,j,s) 
%\end{equation}
%which can be ordered as follows 
%\begin{equation}\label{final2}
%f^2(v,s) = \sum_{i=1}^s X^2(i+d_v) L(i,i,s) + 2 \sum_{i=1}^{s-1} \sum_{j=1+i}^{s} X(i+d_v)X(j+d_v)L(i,j,s).and 
%\end{equation}
In this section we consider two exemplary Gaussian processes, namely Gaussian white noise (WN) and the autoregressive fractional  moving average (ARFIMA(0,d,0)) process with parameter $0<d<0.5$. The first one belongs to the family of processes with a short memory, while the second one  - to the family of long memory processes. For both processes we calculate the expected values of $F^2(s)$ analytically by using the methodology presented in section \ref{main}. We then compare the analytical expected values with the empirical ones obtained from simulated trajectories of the bot processes considered.

\subsection{Gaussian white noise}
In this part we consider the Gaussian white noise $\{Z(t)\}\sim WN(0,\sigma^2)$ which is a Gaussian process $\{Z(t)\}$ with $E[Z(t)]=0$ and the autocovariance function
\begin{equation}\label{wn_acf}
E[Z(t)Z(t+\tau)] = \begin{cases}
     \sigma^2,& \mbox{if}\; \tau=0 \\
      0, & \mbox{if}\; \tau\neq 0.
   \end{cases}
\end{equation}
Now let us denote $\{Z(1),Z(2),\ldots,Z(N)\}$ a trajectory of $WN(0,\sigma^2)$ of length $N$. According to Theorems \ref{th1} and \ref{th2} we can write
\begin{equation*}\begin{split}
E[F^2(s)] &=  \frac{1}{[N/s]} \sum_{v=1}^{[N/s]}  \sum_{i=1}^s E\left( Z^2(i+d_v) \right) \frac{1}{s}\sum_{t=1}^s\left[ \delta_{i,t} - P_1(i,t) \right]\left[ \delta_{i,t} - P_1(i,t) \right] \\
&=  \frac{1}{[N/s]} \sum_{v=1}^{[N/s]}  \sum_{i=1}^s  E\left(Z^2(i)\right)  \frac{1}{s}\sum_{t=1}^s\left[ \delta_{i,t} - P_1(i,t) \right]\left[ \delta_{i,t} - P_1(i,t) \right] \\
&=   \sum_{i=1}^s E \left( Z^2(i) \right)  \frac{1}{s}\sum_{t=1}^s\left[ \delta_{i,t} - P_1(i,t) \right]\left[ \delta_{i,t} - P_1(i,t) \right] \\
&=   \sigma^2 \sum_{i=1}^s  \frac{1}{s}\sum_{t=1}^s\left[ \delta_{i,t} - P_1(i,t) \right]\left[ \delta_{i,t} - P_1(i,t) \right],
\end{split}\end{equation*}
where $P_1(\cdot,\cdot)$ is given in (\ref{we2}).
One can show that the following holds
\begin{equation*}\begin{split}
\sum_{i=1}^s \frac{1}{s}\sum_{t=1}^s\left[ \delta_{i,t} - P_1(i,t) \right]\left[ \delta_{i,t} - P_1(i,t) \right] &= \frac{1}{s} \sum_{i=1}^s \left( 1- 2P_1(i,i) + \sum_{t=1}^s P_1^2(i,t) \right) 
=\frac{s-2}{s}.
\end{split}\end{equation*}
Thus, finally we have
\begin{equation}\label{wne}
E[F^2(s)]  = \sigma^2 \frac{s-2}{s}.
\end{equation}
In Fig. \ref{fig1} we demonstrate a comparison between the theoretical expected value of DFA for $WN(0,\sigma^2)$ given in (\ref{wne}) and the empirical one for the simulated trajectories of the considered process for $s=N/4$. We assume $\sigma=1$, length of the simulated trajectories $N=100$ and the number of Monte Carlo simulations $M=500$. As one can see, the theoretical and empirical DFAs coincide. We see that asymptotically, $E[F^2(s)]$ becomes constant, which is exactly the scaling which one expects if the considered process is the white noise.\\
\begin{figure}[ht!]
\begin{center}
\includegraphics[scale=0.6]{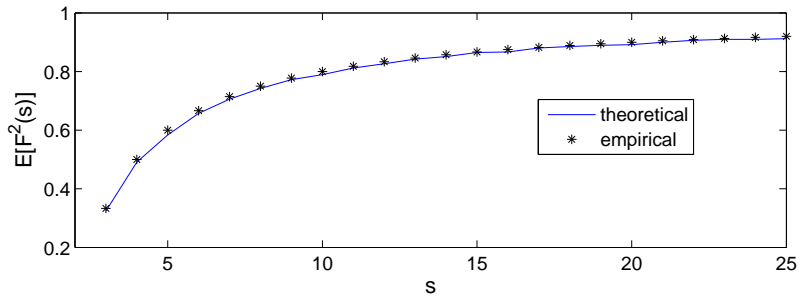}
\caption{The comparison between theoretical and empirical expected values for DFA for $WN(0,\sigma^2)$ for $\sigma=1$, $N=100$ (the trajectory length). The empirical expectation was calculated with $M=500$ trajectories of the Gaussian white noise.}
\end{center}\label{fig1}
\end{figure}

\subsection{ARFIMA$(0,d,0$) process}
As the second example we consider ARFIMA$(0,d,0)$ process $\{U(t)\}$ with $0<d<0.5$. It is a  simple example of general class of ARFIMA$(p,d,q)$ time series. The process $\{U(t)\}$ is stationary and satisfies the following equation {\cite{bib:Granger, bib:Hosking}}
\begin{equation}
\label{arfima:def}
\left(1-B\right)^dU(t) = Z(t),
\end{equation} 
for $0<d<0.5$, and $\{Z(t)\}$ constitutes a sample of independent identically distributed Gaussian random variables with zero mean and unit variance. Therefore for $d=0$ the ARFIMA$(0,d,0)$ process reduces to a Gaussian white noise $WN(0,1)$ defined in the previous subsection. 
The operator $\left(1-B\right)^d$ can be defined by the infinite binomial expansion in the form  
\begin{equation*}
\left(1-B\right)^d = \sum\limits_{k=0}^\infty \pi_jB^j,
\end{equation*}
where
$\pi_j = \frac{\Gamma(j-d)}{\Gamma(-d)\Gamma(j+1)} = \prod_{0<k\leq j}\frac{k-1-d}{k},\;j=0,1,2,\cdots$
and $B$ is the backward shift operator defined as $B^jU(t) = U(t-j)$.
The process $\{U(t)\}$ can be mapped onto an infinite moving average model, thus the following representation also holds
\begin{equation}
\label{arfima:moving}
U(t) = \sum\limits_{j=0}^{\infty} a_j Z(t-j), 
\end{equation}
where
$a_j = \frac{\Gamma(j+d)}{\Gamma(d)\Gamma(j+1)} = \prod_{0<k\leq j}\frac{k-1+d}{k},\;j=0,1,2,\cdots$
The autocovariance function of the process $\{U(t)\}$ is given by
\begin{equation}
\label{arfima:cov}
E[U(i)U(i+\tau)] = \frac{\Gamma(\tau+d)\Gamma(1-d)}{\Gamma(\tau+1-d)\Gamma(d)}.
\end{equation}
Now let us assume $\{U(1),U(2),\ldots,U(N)\}$ is a trajectory of ARFIMA$(0,d,0)$ of length $N$ for $0<d<0.5$. Following Theorems \ref{th1} and \ref{th2} one obtains
\begin{equation*}
E[F^2(s)]=  \frac{1}{[N/s]s}  \sum_{v=1}^{[N/s]} \sum_{m=1}^s E\left[\left(\sum_{i=1}^s U(i+d_v)(\delta_{im}-P_1(i,m))\right)^2\right],
\end{equation*}
where $P_1(\cdot,\cdot)$ is given in (\ref{we2}). Expanding the square in the above sum we get
\begin{equation*}
E[F^2(s)]=  \frac{1}{[N/s]s} \sum_{v=1}^{[N/s]} \sum_{m=1}^s \sum_{i,j=1}^s E[U(i+d_v)U(j+d_v)](\delta_{i,m}-P_1(i,m))(\delta_{j,m}-P_1(j,m)).
\end{equation*}
Now we will use two important facts. The first one is that the autocovariance function $E[U(i+d_v)U(j+d_v)]$ does not depend on $d_v$ because $\{U(t)\}$ is a stationary process. The second fact follows from the equality
\begin{equation*}
\sum_{m=1}^s(\delta_{i,m}-P_1(i,m))(\delta_{j,m}-P_1(j,m)) = \delta_{i,j}-2P_1(i,j) +\sum_{m=1}^s P_1(i,m)P_1(j,m).
\end{equation*}
We thus have
\begin{equation*}
E[F^2(s)]=   \frac{1}{s}  \sum_{i,j=1}^s E[U(i)U(j)]\left(\delta_{i,j}-2P_1(i,j) +\sum_{m=1}^s P_1(i,m)P_1(j,m)\right).
\end{equation*}
We order the above sum according to the time lag $\tau$ and get
\begin{equation}\begin{split}
E[F^2(s)]&=   \frac{1}{s}  \Bigg[\sum_{i=1}^s E[U^2(i)]\left(1-2P_1(i,i) +\sum_{m=1}^s P_1^2(i,m)\right)\\
&+2\sum_{\tau=1}^{s-1} \sum_{i=1}^{s-\tau} E[U(i)U(i+\tau)]\left(-2P_1(i,j) +\sum_{m=1}^s P_1(i,m)P_1(j,m)\right)\Bigg].
\end{split}\end{equation}
Now taking the autocovariance function of ARFIMA$(0,d,0)$ given in (\ref{arfima:cov}) for large values of $s$ we obtain the asymptotic behavior of $E[F^2(s)]$, 
\begin{equation}\label{arfimae}
E[F^2(s)] \approx 1- 2\frac{(3-2d)\Gamma(1-d)}{d(1+2d)(3+2d)\Gamma(d)}s^{2d-1}.
\end{equation}
We notice that for $d\rightarrow 0$ the prefactor in (\ref{arfimae}) behaves like
\begin{equation}
2\frac{(3-2d)\Gamma(1-d)}{d(1+2d)(3+2d)\Gamma(d)} \rightarrow 2,
\end{equation}
and then the result coincides with the expected value of $F^2(s)$ for $WN(0,1)$, see formula (\ref{wne}).\\
In order to demonstrate the behavior of $E[F^2(s)]$ for ARFIMA$(0,d,0)$ for $0<d<0.5$ in Fig. \ref{fig22} we plot the theoretical expected value according to formula (\ref{arfimae}) for $N=100$ and two selected values of $d$, namely $0.2$ and $0.4$. As before, we take $s=N/4$. The theoretical formulas are compared with the empirical ones obtained with $500$ Monte Carlo simulations. As one can see both functions coincide perfectly for selected values of $d$. The power law exponent is $2d-1$, as expected. For large $s$ the mean value tends to $1$ for both values of the parameter $d$.
\begin{figure}[ht!]
\begin{center}
\includegraphics[scale=0.6]{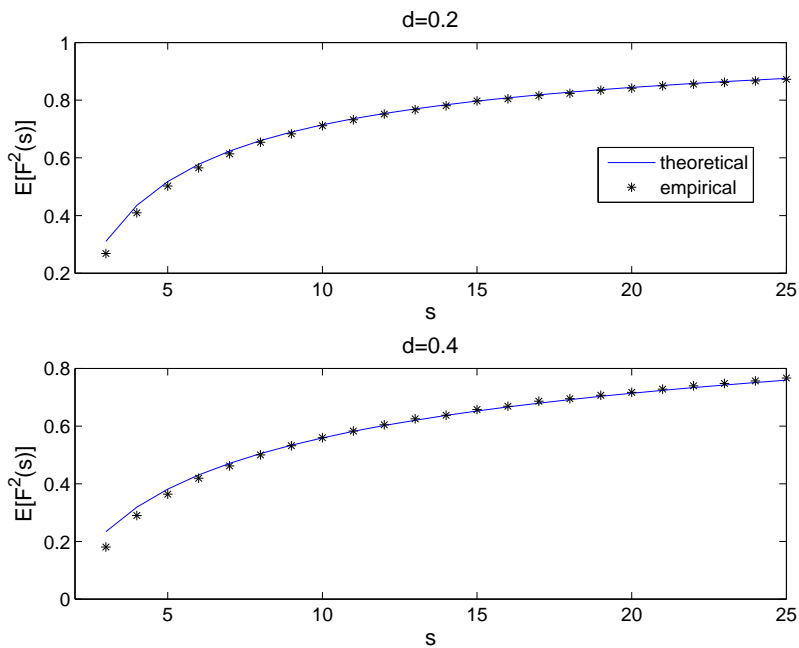}
\caption{The comparison between theoretical and empirical expected values of $F^2(s)$ for ARFIMA$(0,d,0)$ for $d=0.2$ (top panel) and $d=0.4$ (bottom panel). The trajectory length is $N=100$. The empirical expectation was calculated with $M=500$ trajectories of ARFIMA$(0,d,0)$ process.}
\end{center}\label{fig22}
\end{figure}

\section*{Acknowledgements}
AW and JG would like to acknowledge a support of National Center of Science Opus Grant No. 2016/21/B/ST1/00929 "Anomalous diffusion processes and their applications in real data modelling". AVC acknowledges the support by the Deutsche Forschungsgemeinschaft within the project ME1535/6-1.
\noindent

\end{document}